\newcommand{\bra}[1]{\mbox{$\langle #1|$}}
\newcommand{\ket}[1]{\mbox{$|#1\rangle$}}
\newcommand{\ketbra}[2]{\mbox{$|#1\rangle\langle #2|$}}
\def\eye{\mathds{1}}
\newtheorem{thm}{Theorem}[section]
\newtheorem{prop}[thm]{Proposition}
\newtheorem{prob}{Problem}[section]
\begin{document}

\preprint{APS/123-QED}

\title{Variationally Learning Grover's Quantum Search Algorithm}
\thanks{All supporting data and source code are available online. \cite{GitHub}}%

\author{Mauro E.S.~Morales}
\email{mauricioenrique.moralessoler@skoltech.ru}
\author{Timur Tlyachev}%
\author{Jacob Biamonte}
\email{jacob.biamonte@qubit.org}
 \homepage{DeepQuantum.AI}
\affiliation{Deep Quantum Labs, Skolkovo Institute of Science and Technology, 3 Nobel Street, 
Moscow, Russia 121205}

\date{\today}

\begin{abstract}
Given a parameterized quantum circuit such that a certain setting of these real-valued parameters corresponds to Grover's celebrated search algorithm, can a variational algorithm recover these settings and hence learn Grover's algorithm? We studied several constrained variations of this problem and answered this question in the affirmative, with some caveats. Grover's quantum search algorithm is optimal up to a constant. The success probability of Grover's algorithm goes from unity for two-qubits, decreases for three- and four-qubits and returns near unity for five-qubits then oscillates ever-so-close to unity, reaching unity in the infinite qubit limit. The variationally approach employed here found an experimentally discernible improvement of $5.77\%$ and $3.95\%$ for three- and four-qubits respectively. Our findings are interesting as an extreme example of variational search, and illustrate the promise of using hybrid quantum classical approaches to improve quantum algorithms.   This paper further demonstrates that to find optimal parameters one doesn't need to vary over a family of quantum circuits to find an optimal solution. This result looks promising and points out that there is a set of variational quantum problems with parameters that can be efficiently found on a classical computer for arbitrary number of qubits.
\end{abstract}

\maketitle



Grover's algorithm \cite{Grover-original} is one of the most celebrated quantum algorithms, enabling quantum computers to quadratically outperform classical computers at database search provided database access is restricted to a `black box' -- called the oracle model. In addition to the wide application scope of database search, Grover's algorithm has further applications as a subroutine used in a variety of other quantum algorithms. 

Variational hybrid quantum/classical algorithms have recently become an area of significant interest \cite{original-vqe, Fahri-qaoa1,Nasa-2017,Nasa-qaoa-grover,2017arXiv171001022M,Nasa-qaoa-fermion, 2017arXiv171205304V, 2017arXiv171205771O}.
These algorithms have shown several advantages such as robustness to quantum errors and low coherence time requirements \cite{theory-var}, which make them ideal for implementations in current quantum computer architectures. Here we take inspiration from algorithms such as the variational quantum eigensolver (VQE) \cite{original-vqe} and the quantum approximate optimization algorithm (QAOA) \cite{Fahri-qaoa1}. The general procedure of these variational hybrid quantum/classical algorithms is the following:
\begin{enumerate}
\item Prepare state $\ket{\psi(\theta)}$ using a quantum computer, where $\theta = (\theta_1, \theta_2, ..., \theta_k)$ are tunable parameter(s). The state is prepared by specifying a sequence of $k$ gates $\mathcal{U}^{(1)}(\theta_1) \mathcal{U}^{(2)}(\theta_2)... \mathcal{U}^{(k)}(\theta_k)$ applied to a starting reference state $\ket{r}$. Thus, the prepared state is $\ket{\psi(\theta)} = \mathcal{U}^{(1)}(\theta_1) \mathcal{U}^{(2)}(\theta_2)... \mathcal{U}^{(k)}(\theta_k) \ket{r}$.
\item Measure the expectation value of an objective function $\langle A (\theta) \rangle $. The objective function will depend on the problem to be solved. In the case of VQE the interest is in finding the eigenvalues of a given Hamiltonian, moreover the quantum computer is used to calculate the expectation values of the separate terms in the Hamiltonian. For QAOA the objective function approximates the solution of an optimization problem (for details see \cite{Fahri-qaoa1}). The expectation value of this objective function can be calculated using a quantum computer but can also be efficiently evaluated classically.
\item Using a classical computer and an optimization algorithm find new parameters $\theta'$ that minimize $\langle A (\theta) \rangle $. Having found the new parameters, iterate.
\end{enumerate}

Here we consider a variational approach to the established problem of Grover's search \cite{Grover-original}. Note that Grover's search was generalized to the setting of adiabatic quantum computing in \cite{2002PhRvA..65d2308R, 2002quant.ph..6003V}. Grover's quantum search algorithm has been shown to be asymptotically optimal \cite{Bennet-proof,Boyer-proof,Zalka-proof} and hence provides a limiting test case to apply contemporary variational hybrid quantum/classical algorithms to.

We apply a variational algorithm to see if we can recover Grover's algorithm under several constraining scenarios. We motivate our study by recalling that sequencing two Hermitian projectors (Hamiltonians) can be used to recover Grover's search algorithm exactly. We then constrain the search space.  For example, in one scenario we fix the oracle---as is standard---to apply a phase factor of $-1$ to the marked item when varying the time the diffusion generator is applied. In another scenario, we allow the oracle and the diffusion to take the same angle in all iterations. The main objective is to see if a variational algorithm is capable of recovering Grover's algorithm given different restrictions. A peculiar finding is an experimentally discernible improvement of $5.77\%$ and $3.95\%$ for three- and four-qubits respectively (compared to Grover's search algorithm). 

\section{Problem Statement}
\label{sec:variational-search}

Let $n$ be the number of qubits and let $N=2^n$ be the size of the search space. We are searching for a particular bitstring $\bm{\omega} = \omega_1, \omega_2, \omega_3, ..., \omega_n$. We define a pair of rank-$1$ projectors.
\begin{align}
  \label{def:omega-op}
  P_{\bm{\omega}} &= \ketbra{\bm{\omega}}{\bm{\omega}}\\
  \label{def:plus-op}
  P_{\bm{+}} &= \ketbra{+}{+}^{\otimes n} = \ketbra{\bm{s}}{\bm{s}}
\end{align}
where $\ket{\bm{s}}=\frac{1}{\sqrt{N}}\sum_{\bm{x}\in \left\{0,1\right\}^n} \ket{\bm{x}}$. To find $\ket{\bm{\omega}}$, we consider first an ansatz formed by sequencing operators defined in \eqref{eq:operator-v} and \eqref{eq:operator-k}. These operators prepare a state $\ket{\varphi(\bm{\alpha}, \bm{\beta})}$, defined in \eqref{state_prep}, with vectors $\bm{\alpha} = \alpha_1, \alpha_2, ..., \alpha_p$ and $\bm{\beta}=\beta_1, \beta_2, ..., \beta_p$. We seek to minimize the orthogonal complement of the subspace for the searched string \eqref{eq:projector-w}.
\begin{align}
\label{eq:projector-w}
P_{\bm{\omega}^{\perp}} &= \eye - P_{\bm{\omega}}
\end{align}
We sometimes call \eqref{def:plus-op} the driver Hamiltonian or diffusion operator \cite{Nasa-qaoa-grover}.
The state is varied to minimize this orthogonal component \eqref{eq:minimization}.
\begin{align}
\label{eq:minimization}
\min_{\bm{\alpha},\bm{\beta}} \bra{\varphi(\bm{\alpha}, \bm{\beta})}P_{\bm{\omega}^{\perp}} \ket{\varphi(\bm{\alpha},\bm{\beta})} \geq \min_{\ket{\phi}} \bra{\phi}P_{\bm{\omega}^{\perp}}\ket{\phi}
\end{align}
To prepare the state we develop the sequence \eqref{state_prep}.
\begin{align}
\label{state_prep}
\ket{\varphi(\bm{\alpha},\bm{\beta})} = \mathcal{K}(\beta_p) \mathcal{V}(\alpha_p)\cdots\mathcal{K}(\beta_1)\mathcal{V}(\alpha_1)\ket{\bm{s}}.
\end{align}
Where the operators are defined as \eqref{eq:operator-v} and \eqref{eq:operator-k}.
\begin{align}
\label{eq:operator-v}
\mathcal{V}(\alpha) &:= e^{\imath \alpha P_{\bm\omega}}\\
\label{eq:operator-k}
\mathcal{K}(\beta) &:= e^{\imath \beta P_{\bm{+}}}
\end{align}
The length of the sequence is $2p$, for integer $p$.
We consider the following problems that the variational algorithm will face.

\begin{prob}[Standard Oracle, Variational Diffusion]
\label{def:standar-oracle-variational-diffusion} Find $p$ angles $\bm\beta = (\beta_1, ..., \beta_p)$ and fix $\bm\alpha = (\alpha_1 = \pi, ..., \alpha_p = \pi) $ to minimize \eqref{eq:minimization} via the sequence \eqref{state_prep}, given the operators \eqref{eq:operator-v} and \eqref{eq:operator-k}.
\end{prob}
In this problem, we have fixed the standard black-box oracle of Grover's algorithm and the algorithm optimizes for the angles in the diffusion operator. We also consider a restricted variational problem where all the diffusion operators must apply the same variational angle. 
\begin{prob}[Standard Oracle, Restricted Variational Diffusion]
\label{def:standard-oracle-restricted-variational-diffusion} As in problem \ref{def:standar-oracle-variational-diffusion} except find $p$ angles $\bm\beta = (\beta, ..., \beta)$ and choosing $\bm\alpha = (\alpha_1 = \pi, ..., \alpha_p = \pi)$.
\end{prob}
A third problem to which we will apply the variational algorithm is considering both the oracle and the diffusion angles as variational parameters. We consider in this case a phase matching condition, meaning that angles are restricted so they are equal.
\begin{prob}[Restricted Variational Oracle and Diffusion]
\label{def:restricted-variational-oracle-diffusion} As in \ref{def:standar-oracle-variational-diffusion} except find $2p$ angles $(\bm\alpha, \bm\beta) = (\alpha_1, ..., \alpha_p, \beta_1, ..., \beta_p)$ with the restriction  $\bm\beta = \bm\alpha = \alpha_1, \alpha_2, ..., \alpha_p$ and $\alpha_1 = \alpha_2 = ... = \alpha_p$.
\end{prob}
Finally we consider variations of the oracle angles and the diffusion operator separately.
\begin{prob}[Variational Oracle and Diffusion]
\label{def:variationa-oracle-diffusion} As in problem \ref{def:standar-oracle-variational-diffusion} except find $2p$ angles $(\bm\alpha, \bm\beta) = (\alpha_1, ..., \alpha_p, \beta_1, ..., \beta_p)$.
\end{prob}
Note that the angles obtained in \eqref{eq:minimization} only minimize the selected cost function for a particular number of qubits. Once the number of qubits change, the angles obtained in the minimization do not necessarily give the highest probability to find the searched item. Also its important to note that these angles are independent of $\bm\omega$, if we fix the number of qubits in the problem and run the algorithm with a particular set of angles, then these angles give the same probability no matter the $\bm\omega$ we are looking for. As stated earlier our objective in this work is to see if variational algorithms are able to recover Grover's algorithm, for this we need a way of comparing both algorithms.

To compare these variational algorithms with Grover's algorithm, consider the variational ansatz case for $p=1$. Here we recover Grover's operators as the optimal solution for finding a particular string. To prove this, first note that there is only one pair of angles $(\alpha, \beta)$, so we consider \eqref{eq:operator-v}
and \eqref{eq:operator-k} directly. Since $\ketbra{\bm\omega}{\bm\omega}$ is a projector we can expand \eqref{eq:operator-v}. 
\begin{equation}
\begin{split}
\mathcal{V}(\alpha) &= e^{\imath \alpha \ket{\bm\omega}\bra{\bm\omega}} = \eye + (e^{\imath \alpha} - 1)\ket{\bm\omega}\bra{\bm\omega}
\\
&= \eye - (e^{\imath \widetilde{\alpha}} + 1)\ket{\bm\omega}\bra{\bm\omega}
\end{split}
\end{equation}
Where in the last step we have defined $\widetilde{\alpha} = \alpha - \pi $. Now we expand the unitary for the driver Hamiltonian \eqref{eq:driver}.
\begin{equation}
\begin{split}
\label{eq:driver}
\mathcal{K}(\beta)&=  e^{\imath \beta \ket{\bm s}\bra{\bm s}}\\
&= H^{\otimes n}(\eye + (e^{\imath \beta} - 1)\ket{\bm 0}\bra{\bm 0}) H^{\otimes n}\\
&\sim  H^{\otimes n}(-\eye + (e^{\imath \widetilde{\beta}} + 1)\ket{\bm 0}\bra{\bm 0}) H^{\otimes n}\\
&= (e^{\imath \widetilde{\beta}}+1)\ket{\bm s}\bra{\bm s} - \eye
\end{split}
\end{equation}
Where $\sim$ relates the equivalence class of operators indiscernible by a global phase, $H$ is the Hadamard gate and $\widetilde{\beta}= \beta  - \pi$. Notice that for $\widetilde{\alpha} = \widetilde{\beta} = 0$ Grover's oracle and diffusion operators are recovered.

To see that the variational search includes Grover's operators for the case $p > 1$, let us impose $\alpha_1 = \alpha_2 = ... = \alpha_p$ and $\beta_1 = \beta_2 = ... = \beta_p$. In Fig.~\ref{fig:oracle} and Fig.~\ref{fig:qaoa_circuit} the circuits for the oracle and the diffusion operator respectively are shown.  If $i$ pairs of operators \eqref{eq:operator-v} and \eqref{eq:operator-k} are applied to the initial state $\ket{\bm{s}}$ as in \eqref{state_prep}, then we write the prepared state as \eqref{state2-1}. 

\begin{align}
\label{state2-1}
\ket{\varphi_i} = A_i\frac{1}{\sqrt{N-1}}\sum_{x\neq \bm\omega} \ket{x} + B_i \ket{\bm\omega}
\end{align}

We can relate the amplitudes of one step with the amplitudes of the next step with a recursion such as those that appear in \eqref{recursion1} and \eqref{recursion2}; we express the effect of the operators for the variational search over the state as a matrix \eqref{eq:matrix}.

\begin{align}
\label{eq:matrix}
\begin{pmatrix}
    1+\frac{a(N-1)}{N}    & -a(b+1)\frac{\sqrt{N-1}}{N}  \\
    -a\frac{\sqrt{N-1}}{N}   & (b+1)(1+\frac{a}{N}) 
\end{pmatrix}
\end{align}
Here $a = e^{\imath \alpha} -1$ and $b = e^{\imath \beta} -1$. Notice that for $a = b = -2$ the same relation between amplitudes at different steps in \eqref{recursion1} and \eqref{recursion2} up to a global phase in the definition of the Grover operators is obtained. Thus, the variational search space includes Grover's original algorithm. From this matrix it is also possible to see that if the target state is changed, then the angles found through the variational algorithm will give the same probabilities.

An arbitrary phase applied by the oracle was first proposed in \cite{Zalka:arbitrary-phase} although only remarks regarding the use of an arbitrary phase to get higher probabilities for the searched item were done, afterwards several studies regarding the validity of replacing Grover's oracle and diffusion operator with an arbitrary phase version were made \cite{chinese-article1,chinese-article2,Hoyer:arbitrary-phase,chinese-article3}. The main conclusion is that a phase matching condition is required. This condition roughly stated requires the arbitrary angles in the oracle and the diffusion operator to be approximately equal. To address this, we consider also in this work the problem shown in definition \ref{def:restricted-variational-oracle-diffusion}, restricting the variational angles to be equal.

\section{Results}
\label{sec:comparison}
We have compared the performance of the variational search algorithm to Grover for different number of qubits and for the problems \ref{def:standar-oracle-variational-diffusion}, \ref{def:standard-oracle-restricted-variational-diffusion}, \ref{def:restricted-variational-oracle-diffusion} and \ref{def:variationa-oracle-diffusion}.
Surprisingly in problems \ref{def:standar-oracle-variational-diffusion}, \ref{def:restricted-variational-oracle-diffusion} and \ref{def:variationa-oracle-diffusion} it was found that for a small number of qubits and for the same number of operator applications (or oracle calls) $p$ on which Grover obtains the greatest probability, the variational algorithm achieves greater probabilities for finding the searched string. This advantage for a low number of qubits can be seen in Fig.~\ref{fig:alpha3d}. The same plot is obtained for the variational algorithm defined for problems \ref{def:standar-oracle-variational-diffusion}, \ref{def:restricted-variational-oracle-diffusion} and \ref{def:variationa-oracle-diffusion}. In Fig.~\ref{fig:diff-prob} we show the difference between the maximum probabilities of succesfully finding the string between Grover's algorithm and the variational approach up to $11$ qubits. As the number of qubits grows there are diminishing oscillations in this difference, this agrees with the fact that Grover is asymptotically optimal. This diminishing oscillations go to zero very quickly after $10 qubits$.

In the case of problem \ref{def:standard-oracle-restricted-variational-diffusion} we find through numerics that the advantage over Grover's algorithm is lost. We also show in table \ref{table:comparison}, for the variational problem \ref{def:restricted-variational-oracle-diffusion}, the percentage increase between the variational algorithm and Grover's for the probabilities at the number of oracle calls on which this probability is maximal from $2$ to $6$ qubits. For higher numbers of qubits this difference becomes negligible, although there are small oscillations. The same numbers are obtained (except the angle) for the algorithms in problems \ref{def:standar-oracle-variational-diffusion} and \ref{def:variationa-oracle-diffusion}. We show in Fig.~\ref{fig:alpha3d} the probability for finding the solution as a function of the only angle and number of qubits when considering the algorithm of problem \ref{def:restricted-variational-oracle-diffusion}. In case of problem \ref{def:standar-oracle-variational-diffusion} we recover the same probabilities for the marked state as in Grover's algorithm, without the small improvement. From the matrix in \eqref{eq:matrix} and imposing $a=b$ it is possible to plot the probability as a function of the variational angle $\alpha$ and the number of qubits for the algorithm in problem \ref{def:restricted-variational-oracle-diffusion}. We show this plot for $3$ to $6$ qubits in Fig.~\ref{fig:alpha3d}. 

The local maximum at $\pi$ is clearly seen at $n=3$ qubits but also at this or more qubits this angle is not the global maximum. The variational search manages to find these global maxima by using the basin hopping method \cite{basin} for optimization (the search space of the angle is bounded since we restrict to $\alpha \in \left[0, \pi \right]$) is used. The difference of the probabilities to find the solution between the original Grover's algorithm and the variational search does not diminish monotonically with the number of qubits. The same results are obtained for problems \ref{def:standar-oracle-variational-diffusion}, \ref{def:restricted-variational-oracle-diffusion}, \ref{def:variationa-oracle-diffusion}. In the case of problem \ref{def:standard-oracle-restricted-variational-diffusion}, the difference is negligible. 

For low $N=2^n$, where $n$ is the number of qubits in the search, the variational search provides sequences that are more likely to succeed in finding the solution than Grover's algorithm.

Grover's algorithm has been proven to be optimal. This slight advantage of the variational algorithm over Grover's is possible since the proofs have considered a large number of qubits \citep{Bennet-proof,Boyer-proof,Zalka-proof}. We show in the following that this advantage disappears for large $N$.

To prove this we first consider the following theorem.
\begin{thm}
\label{noadvantage}
The maximum probability achievable for the target state in Grover $\to 1$ as $N \to \infty$.
\end{thm}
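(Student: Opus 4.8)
The plan is to prove the statement through the standard geometric picture of Grover iteration as a rotation in a real two-dimensional subspace, a structure the paper has already exposed in \eqref{state2-1} and \eqref{eq:matrix}. First I would note that throughout the algorithm the state remains in $\mathrm{span}\{\ket{\bm\omega},\ket{\bm\omega^\perp}\}$, where $\ket{\bm\omega^\perp}=\frac{1}{\sqrt{N-1}}\sum_{x\neq\bm\omega}\ket{x}$ is the normalized uniform superposition over the unmarked states; this is exactly the two-component form of $\ket{\varphi_i}$ in \eqref{state2-1}. Setting $\theta=\arcsin(1/\sqrt{N})$ so that $\braket{\bm\omega}{\bm s}=\sin\theta$, the standard geometric analysis of Grover iteration---equivalently, specializing the transfer matrix \eqref{eq:matrix} to the Grover values $a=b=-2$---shows that each step advances the state by an angle $2\theta$ toward $\ket{\bm\omega}$ within this plane. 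Hence after $k$ steps the amplitude on the target is $B_k=\sin\bigl((2k+1)\theta\bigr)$ and the success probability is $P_k(N)=\sin^2\bigl((2k+1)\theta\bigr)$.

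Next I would optimize over the integer number of iterations. Writing $k^\ast$ for the integer nearest to $(\pi/2-\theta)/(2\theta)$, the corresponding angle $(2k^\ast+1)\theta$ is the element of the arithmetic progression $\{\theta,3\theta,5\theta,\dots\}$ closest to $\pi/2$; since consecutive elements are spaced by $2\theta$, it necessarily lies within $\theta$ of $\pi/2$, i.e. $\lvert(2k^\ast+1)\theta-\pi/2\rvert\le\theta$. Monotonicity of $\sin^2$ on $[\pi/2-\theta,\pi/2]$ then yields the lower bound
\begin{equation}
P_{k^\ast}(N)=\sin^2\bigl((2k^\ast+1)\theta\bigr)\geq\sin^2\!\Bigl(\frac{\pi}{2}-\theta\Bigr)=\cos^2\theta=1-\frac{1}{N}.
\end{equation}

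Finally I would pass to the limit. Since $P_{k^\ast}(N)$ is a genuine success probability it is at most $1$, so the maximum achievable probability $P_{\max}(N)$ is squeezed by $1-1/N\le P_{\max}(N)\le 1$, and this sandwich forces $P_{\max}(N)\to 1$ as $N\to\infty$. The one genuinely delicate point is the discretization: because the number of Grover iterations must be an integer, one cannot rotate exactly onto $\ket{\bm\omega}$, and the entire content of the theorem is that the residual half-step misalignment is bounded by $\theta=\arcsin(1/\sqrt{N})$, which vanishes in the large-$N$ limit. Controlling this rounding error---rather than any analytic subtlety---is the main obstacle, and the elementary estimate $\lvert(2k^\ast+1)\theta-\pi/2\rvert\le\theta$ is precisely what makes it rigorous.
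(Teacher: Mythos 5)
Your proof is correct, and its core is the same as the paper's: both arguments reduce Grover iteration to a rotation in the two-dimensional span of the marked state and the uniform superposition over unmarked states, both arrive at the same success-probability formula (your $\sin^2\bigl((2k+1)\theta\bigr)$ is precisely the paper's \eqref{probtarget} with $\phi=2\theta$), and both identify the integer constraint on the iteration count as the only obstacle. Where you part ways is in how that rounding error is handled, and your treatment is cleaner and strictly stronger. The paper works with Zalka's large-$N$ approximation $\phi\approx 2/\sqrt{N}$ (its Proposition~\ref{zelkaprop} is stated only for $N\gg 1$), writes the optimizing iteration count as the floor expressions \eqref{opt1}--\eqref{opt2}, and finishes by evaluating the limit \eqref{limite}; the conclusion is purely asymptotic and formally rests on an approximate formula. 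You instead keep everything exact --- $\theta=\arcsin(1/\sqrt{N})$ and a rotation by exactly $2\theta$ per step --- observe that the arithmetic progression $\{(2k+1)\theta\}_{k\geq 0}$ has spacing $2\theta$ so its nearest element to $\pi/2$ lies within $\theta$ of it, and deduce the non-asymptotic bound $P_{\max}(N)\geq\cos^2\theta=1-1/N$, from which the theorem follows by squeezing. This buys an explicit convergence rate valid at every finite $N$ and sidesteps the approximation entirely (in fact the paper's formula is also exact once $\phi$ is defined by $\cos\phi=1-2/N$, i.e.\ $\phi=2\theta$, though the paper does not note this). What the paper's route buys in exchange is the explicit optimal iteration count $p_{\max}\approx\pi\sqrt{N}/4-1/2$, which it reuses in its numerical comparisons, whereas your $k^\ast$ is left implicit as a nearest integer.
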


Before proving this we give proof of a proposition from Zalka \citep{Zalka-proof}.
\begin{prop}
\label{zelkaprop}
For $N \gg 1$ the probability of measuring the target state $\ket{\bm{\omega}}$ after making $p$ oracle calls in Grover's algorithm is $P_p = \sin^2(p\phi + \phi/2)$. 
\end{prop}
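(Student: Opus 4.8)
The plan is to collapse the $N$-dimensional dynamics onto the two-dimensional invariant subspace $\mathrm{span}\{\ket{\bm\omega},\ket{\bm\omega^\perp}\}$, where $\ket{\bm\omega^\perp}=\frac{1}{\sqrt{N-1}}\sum_{x\neq\bm\omega}\ket{x}$ is precisely the normalized equal superposition over the unmarked strings that already appears in \eqref{state2-1}. First I would write the initial state in this basis,
\begin{equation}
\ket{\bm s}=\sqrt{\tfrac{N-1}{N}}\,\ket{\bm\omega^\perp}+\tfrac{1}{\sqrt N}\ket{\bm\omega}=\cos\tfrac{\phi}{2}\,\ket{\bm\omega^\perp}+\sin\tfrac{\phi}{2}\,\ket{\bm\omega},
\end{equation}
which defines the angle $\phi$ through $\sin(\phi/2)=1/\sqrt N$ (equivalently $\cos(\phi/2)=\sqrt{(N-1)/N}$). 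In the large-$N$ regime of the proposition this gives $\phi\simeq 2/\sqrt N$, so each step advances the state only slightly and the motion is a slow rotation toward $\ket{\bm\omega}$.

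Next I would identify one Grover iteration with a rotation. Setting $\widetilde\alpha=\widetilde\beta=0$ in \eqref{eq:operator-v} and \eqref{eq:driver} recovers the oracle $\mathcal V(\pi)=\eye-2\ketbra{\bm\omega}{\bm\omega}$ and the diffusion $\mathcal K(\pi)\sim 2\ketbra{\bm s}{\bm s}-\eye$. Inside the chosen subspace the oracle is the reflection about the $\ket{\bm\omega^\perp}$ axis and the diffusion is the reflection about the $\ket{\bm s}$ direction; since $\ket{\bm s}$ sits at angle $\phi/2$ from $\ket{\bm\omega^\perp}$, the composition of two reflections whose axes differ by $\phi/2$ is a rotation by $2\cdot(\phi/2)=\phi$. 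Explicitly, $\mathcal K(\pi)\mathcal V(\pi)$ acts on $(\cos\tfrac{\phi}{2}\,\ket{\bm\omega^\perp}+\sin\tfrac{\phi}{2}\,\ket{\bm\omega})$ as the rotation $R(\phi)$. The equivalent algebraic route is to put $a=b=-2$ in \eqref{eq:matrix} and verify directly that the resulting $2\times2$ matrix is a rotation, reading off its angle as $\phi$ (up to the global phase signalled by $\sim$).

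Finally I would iterate. Applying $R(\phi)$ a total of $p$ times carries the initial angle $\phi/2$ to $\phi/2+p\phi$, so the amplitude on $\ket{\bm\omega}$ becomes $\sin(p\phi+\phi/2)$ and the success probability is $P_p=\sin^2(p\phi+\phi/2)$, as claimed. I would remark that, once $\phi$ is fixed by $\sin(\phi/2)=1/\sqrt N$, this identity is in fact exact for every $N\ge 2$; the hypothesis $N\gg1$ only licenses the convenient replacement $\phi\simeq 2/\sqrt N$, which is what feeds into Theorem~\ref{noadvantage}.

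I expect the main obstacle to be bookkeeping of signs and global phases rather than anything conceptual. The diffusion is only defined up to the global phase indicated by $\sim$ in \eqref{eq:driver}, and depending on the ordering convention the single-step map can emerge as $R(\phi)$, $R(-\phi)$, or $R(\pi-\phi)$, which would naively flip the offset to $-\phi/2$ and shift $p$ by one. I would pin this down by tracking orientation: amplitude amplification must \emph{increase} the overlap at the first step, $P_1>P_0=\sin^2(\phi/2)=|\braket{\bm\omega}{\bm s}|^2=1/N$, which forces the rotation to carry $\ket{\bm s}$ toward $\ket{\bm\omega}$ and selects the $+\phi/2$ offset unambiguously (the opposite orientation would predict the physically wrong $P_1=P_0$).
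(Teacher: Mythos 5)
Your proof is correct, and it follows the same overall strategy as the paper's: restrict to the two-dimensional invariant subspace spanned by $\ket{\bm\omega}$ and $\ket{\bm\omega^{\perp}}$, identify one Grover iteration with a rotation by $\phi$, observe that the initial state sits at angle $\phi/2$, and iterate to obtain $P_p=\sin^2(p\phi+\phi/2)$. The difference is in how the rotation is established: the paper (following Zalka) writes the explicit amplitude recursion \eqref{recursion1}--\eqref{recursion2} and reads off $\cos\phi=1-\tfrac{2}{N}$, $\sin\phi=\tfrac{2\sqrt{N-1}}{N}$, whereas you invoke the geometric fact that oracle and diffusion are reflections about axes ($\ket{\bm\omega^{\perp}}$ and $\ket{\bm s}$) separated by $\phi/2$, whose composition is a rotation by $\phi$. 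Your route buys two things the paper's write-up does not make clean. First, you note that $P_p=\sin^2\bigl((p+\tfrac12)\phi\bigr)$ is exact for every $N\ge 2$ once $\phi$ is defined by $\sin(\phi/2)=1/\sqrt{N}$; the paper's ``if $N\gg 1$'' qualifier before the rotated state is superfluous, since the large-$N$ hypothesis enters only through $\phi\approx 2/\sqrt{N}$, which is what Theorem~\ref{noadvantage} actually consumes. Second, your orientation check $P_1>P_0$ is not mere caution: the recursion \eqref{recursion2} as printed has a sign slip---the matrix $\bigl(\begin{smallmatrix}\cos\phi&-\sin\phi\\ \sin\phi&-\cos\phi\end{smallmatrix}\bigr)$ is not orthogonal, hence cannot be the claimed rotation; the update should read $B_{i+1}=\tfrac{2\sqrt{N-1}}{N}A_i+(1-\tfrac{2}{N})B_i$---and likewise \eqref{eq:matrix} at $a=b=-2$ gives a rotation whose sense depends on the basis-ordering convention, precisely the $R(\phi)$ versus $R(-\phi)$ versus $R(\pi-\phi)$ ambiguity you flagged. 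Your check resolves the orientation correctly and would have caught that typo.
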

\begin{proof}[Proof of proposition \ref{zelkaprop}]
We call the state of the system for step $i$ of Grover's algorithm \ket{\psi_i} and the target state is denoted $\ket{\bm\omega}$. As an initial state for the algorithm we have $\ket{\psi_0} = \frac{1}{\sqrt{N}}\sum_{x=0}^{N-1} \ket{x}$.
Following \cite{Zalka-proof} we write the state of the system in the $i$th state \eqref{eq:state-i} with \eqref{recursion1} and \eqref{recursion2}.

\begin{equation}
\label{eq:state-i}
\ket{\psi_i} = A_i\frac{1}{\sqrt{N-1}}\sum_{x\neq \bm\omega} \ket{x} + B_i \ket{\bm\omega}
\end{equation}
\begin{align}
\label{recursion1}
A_{i+1} = (1-\frac{2}{N})A_i - 2\frac{\sqrt{N-1}}{N}B_i
\end{align}
\begin{align}
\label{recursion2}
B_{i+1} = 2\frac{\sqrt{N-1}}{N}A_i - (1-\frac{2}{N})B_i
\end{align}

This can be written as the result of applying a rotation \cite{Zalka-proof} with $\cos(\phi) = 1-\frac{2}{N}$ and $\sin(\phi) = 2\frac{\sqrt{N-1}}{N}$.
For the proof of the theorem we are interested in the large $N$ limit. Let us consider then $N \gg 1$, thus $\phi \approx \sin \phi \approx \frac{2}{\sqrt{N}}$. The initial state can be written in terms of this angle.

$$\ket{\psi_0}=\cos(\phi/2)\frac{1}{\sqrt{N-1}}\sum_{x\neq \bm\omega} \ket{x} + \sin(\phi/2) \ket{\bm\omega} $$
After applying a rotation with angle $\phi$, $p$ times (equivalent to applying both operators $p$ times), if $N \gg 1$ then
\begin{align*}
\ket{\psi_i}&=\cos(\phi/2 + p\phi)\frac{1}{\sqrt{N-1}}\sum_{x\neq \bm\omega} \ket{x}\\ &+ \sin(\phi/2 + p \phi) \ket{\bm\omega}
\end{align*}
Hence, the probability of measuring $\ket{\bm\omega}$ after $p$ steps is \eqref{probtarget}.
\begin{align}
\label{probtarget}
P_p = \sin^2(p\phi + \phi/2)
\end{align}
\end{proof}
With this, now we prove theorem \ref{noadvantage}.
\begin{proof}[Proof of theorem \ref{noadvantage}]
From \eqref{probtarget} we can calculate the maximum for the probability in the segment $[0,\pi)$. What we actually need is to calculate $p_{max}$ for which the probability is maximum in said segment. We know there is a maximum at $\pi/2$, thus 
$$p_{max} \phi + \phi/2 = \frac{\pi}{2} \implies p_{max} = \frac{\pi}{2 \phi} - \frac{1}{2} $$
Then, for $N \gg 1$ we have 
\begin{equation}
p_{max} = \frac{\pi \sqrt{N}}{4} - \frac{1}{2}
\end{equation}

Recall that $p$ represents the oracle calls, thus it must be an integer. Then the maximum must be either \eqref{opt1} or \eqref{opt2}.
\begin{equation}
\label{opt1}
p_{max} = \left\lfloor \frac{\pi \sqrt{N}}{4} - \frac{1}{2} \right\rfloor
\end{equation}

\begin{equation}
\label{opt2}
p_{max} = \left\lfloor \frac{\pi \sqrt{N}}{4} - \frac{1}{2} \right\rfloor + 1
\end{equation}
We want to prove that as $N \to \infty$ then the maximum probability goes to one. That is, we want to prove $\lim_{N \to \infty} (p\phi + \phi/2) = \pi/2 $. We prove this with the following limit \eqref{limite} --- we can replace \eqref{opt1} with \eqref{opt2} and the result follows anyway.
\begin{equation}
\label{limite}
\lim_{N \to \infty} \left\lfloor \frac{\pi \sqrt{N}}{4} - \frac{1}{2} \right\rfloor \frac{2}{\sqrt{N}} + \frac{1}{\sqrt{N}} = \frac{\pi}{2}
\end{equation}
Hence, $\sin^2(p\phi + \phi/2) \to 1$.
\end{proof}

\begingroup
\setlength{\tabcolsep}{2.3pt}
\begin{table}
\centering
\begin{tabular}{ c c c c } 
 \hline
 \hline
 $N$ & $100 \times (P_{variational} - P_{Grover})/P_{Grover}$ & step $p_{max}$  & angle\\ 
 \hline
 $2^3$ & 5.77\% & 2 & 2.12$^\text{rad}$  \\ 
 $2^4$ & 3.95\% & 3 & 2.19$^\text{rad}$\\ 
 $2^5$ & 0.08\% & 4 & 2.76$^\text{rad}$ \\
 $2^6$ & 0.34\% & 6 & 2.60$^\text{rad}$ \\
 \hline
 \hline
\end{tabular}
 \caption{Percentage increase between the highest probability for finding the solution after measurement obtained in Grover and the two-level variational ansatz. Percent given as a function of $N = 2^n$ where $n$ is the number of qubits and at step $p_{max}$ on which the probability of finding the solution string is maximum. The same table is obtained for the variational ansatz with one angle or with $2p$ angles. Both the diffusion and oracle use the same angle.}
 \label{table:comparison}
\end{table}
\endgroup

With these results then its clear that the advantage is at best negligible for large $N$.

\section{Physical implementations}

Here we note that an experimental implementation of the algorithm for the search problem for the 3-qubit and 4-qubit case is within reach. Recently there has been an experimental implementation of Grover's algorithm in trapped ions for the $3$ qubit case \cite{3qbit-experiment}. We propose that in such experiments it is possible to implement the variational search algorithm proposed in our work and hence obtain higher probabilities. For such implementations a gate decomposition for circuits shown in Fig. \ref{fig:all_circuits} is needed. The single qubit gates $\mathrm{X}$, $\mathrm{H}$, $\mathrm{R_{\alpha}}$ and $\mathrm{C^{k} NOT}$ (k-controlled $\mathrm{NOT}$ gate with $k=3,4$) are implemented in \cite{3qbit-experiment} for the ion trap system. We just need to show how to decompose the k-controlled phase gate such as those shown in Fig. \ref{fig:all_circuits}. A decomposition is given in \cite{3qbit-experiment} that reduces the 3-controlled $\mathrm{R_{\alpha}}$ gate to a pair of Toffoli gates and a 2-controlled $\mathrm{R_{\alpha}}$, the implementation of which exists experimentally for ions \cite{c-phase-ion}.




\section{Effect of Noise}

We have also compared both the variational and Grover's algorithm in the presence of noise. For this we have used the Forest SDK \cite{rigetti} and considered $T_1$ and $T_2$ noise \cite{nielsen00}. This corresponds to so called longitudinal and transversal relaxation times of qubits in the system. The noise is simulated using a Kraus operator approach, yet the Forest SDK works with wavefunction simulation, thus several stochastic runs are required.  We applied our algorithm using the gate set $CZ$, $R_z(\theta)$, $R_x(\pm \pi/2)$. The Pyquil library considers 1-qubit and 2-qubit noise over $R_x$ and $CZ$ respectively. In Fig.~\ref{fig:qaoa-comp} we show the probabilities achieved for the searched string as a function of $T_1$ and $T_2$ parameters for the 3-qubit search problem. To obtain this plot, the algorithm was run $10000$ times for each $T_1$ and $T_2$ then the probabilities for measuring the searched string were calculated with $T_1$, $T_2$ $\in [10^{-6}  s , 10^{-2}  s]$. The area circled by the red line corresponds to a fit circle inside of which the average average difference between the variational algorithm and the optimal probability of Grover's algorithm is greater than $5\%$. We consider this as a threshold to establish a significant difference even in presence of noise.

\begin{figure}
    \centering
   
        \includegraphics[width=0.5\textwidth]{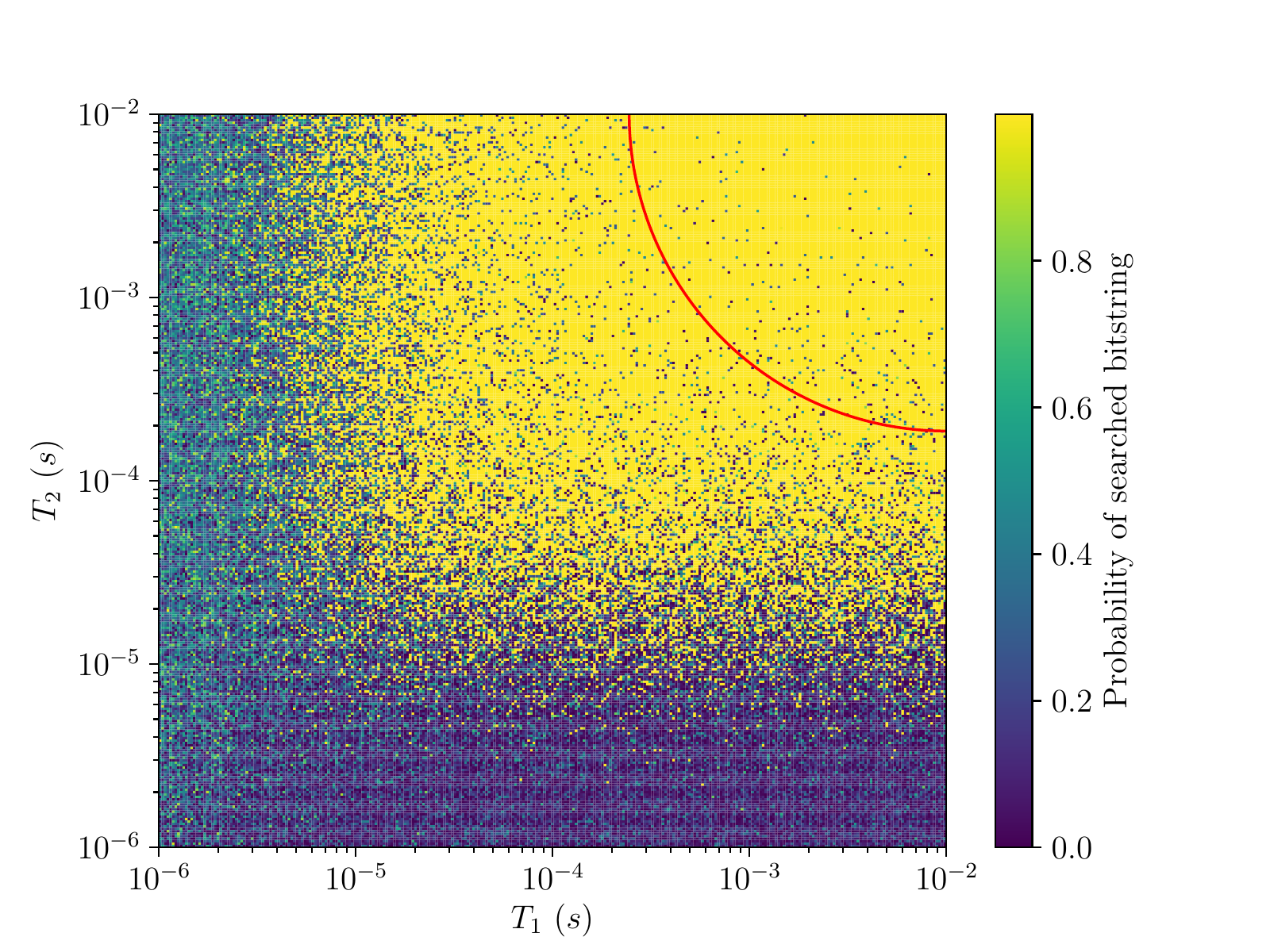}

    \caption{(color online) Probability for finding the solution to the search problem in the case of 3-qubit search under $T_1$ and $T_2$ noise using variational search. The red circumference in the plot corresponds to a fit circle defined such that the probabilities inside the circle are, on average, larger than $5\%$ of the optimal probability obtained with Grover's algorithm.}
    \label{fig:qaoa-comp}

\end{figure}

\section{Discussion}

Let us reflect on several features relating to the most promising results of this study and contrast these with some of the more peculiar shortcomings.

An unusual feature of Grover was the small oscillation in the success probability going form unity for two-qubits and then decreasing for three- and four-qubits. This provided some slack for our variational approach to remove.  In fact, with one angle (shared by both the oracle as well as the diffusion operator), we don't always match the performance of Grover (for five or more qubits). Additional angles however add more degrees of freedom for the optimization procedure to succeed. When considering variational angles for the diffusion operator but not for the oracle operator, theres is still the same advantage as in the case of the variational algorithm with only one angle shared between oracle and diffusion operator. The advantage disappears when only one variational angle is considered for the diffusion operator and the standard oracle of Grover's algorithm is used.

We have presented a transfer matrix \eqref{eq:matrix}, interestingly, the optimization procedure is general in the sense that if we restrict to this transfer matrix, this defines the angle(s) for any search item. In other words, if we find the corresponding angle(s) for a given number of items to search in, we can use this same angle again and again for different search items. Nonetheless it must be noted that the angle(s) obtained through the variational algorithm will only give optimal probabilities for a given number of qubits $n$ in the search problem. For a different $n$, the algorithm needs to be rerun.

One of the main problems of variational algorithms is how to do classical preprocessing, determine the optimal parameters of an algorithm and do it efficiently. This paper demonstrates how this procedure can be done for every finite dimensional search-space on a classical computer. It means that to find optimal parameters one doesn't need to vary over a family of quantum circuits during experiment to find an optimal solution. This result looks promising and points out that there is a set of variational quantum problems with parameters that can be efficiently found on a classical computer for arbitrary number of qubits.

Around the same time as we produced these results, two other teams put out papers which apply machine learning to the optimization and discovery of quantum algorithms \cite{2018arXiv180304114C,2018arXiv180610448W}.  These works---appearing just before and then just after ours---are complimentary as their approach and the algorithms considered are rather different. In \cite{2018arXiv180304114C} the Swap test algorithm for state overlap is learned using a gate sequencing method similar to ours (but different gate set), surprisingly they find shorter gate depths with the variational algorithm compared to the Swap test. In the case of \cite{2018arXiv180610448W}, the authors try a machine learning approach on Simon's algorithm. The optimization method utilized is gradient descent assisted by genetic algorithms by varying over unitaries that depend on a given parameter, they find that this method finds circuits with the same performance as Simon's algorithm. Nonetheless, our work and these two recent papers do share the general concept of training circuits for known algorithms and the results in these studies illustrate that in the future quantum algorithm design might be deeply tied with the methods presented here.

\section{Acknowledgements}
We thank Vladimir Korepin for providing feedback.  Source code producing the plots and generating the numerical findings in this study can be found at {\it Deep Quantum Labs} Github page \cite{GitHub}.

\bibliographystyle{naturemag}  
\onecolumngrid
\bibliography{bibqaoa}

\newpage
\pagebreak
\clearpage

\onecolumngrid

\begin{figure}[t]
\begin{subfigure}[t]{0.49\textwidth}
        \includegraphics[width=\textwidth]{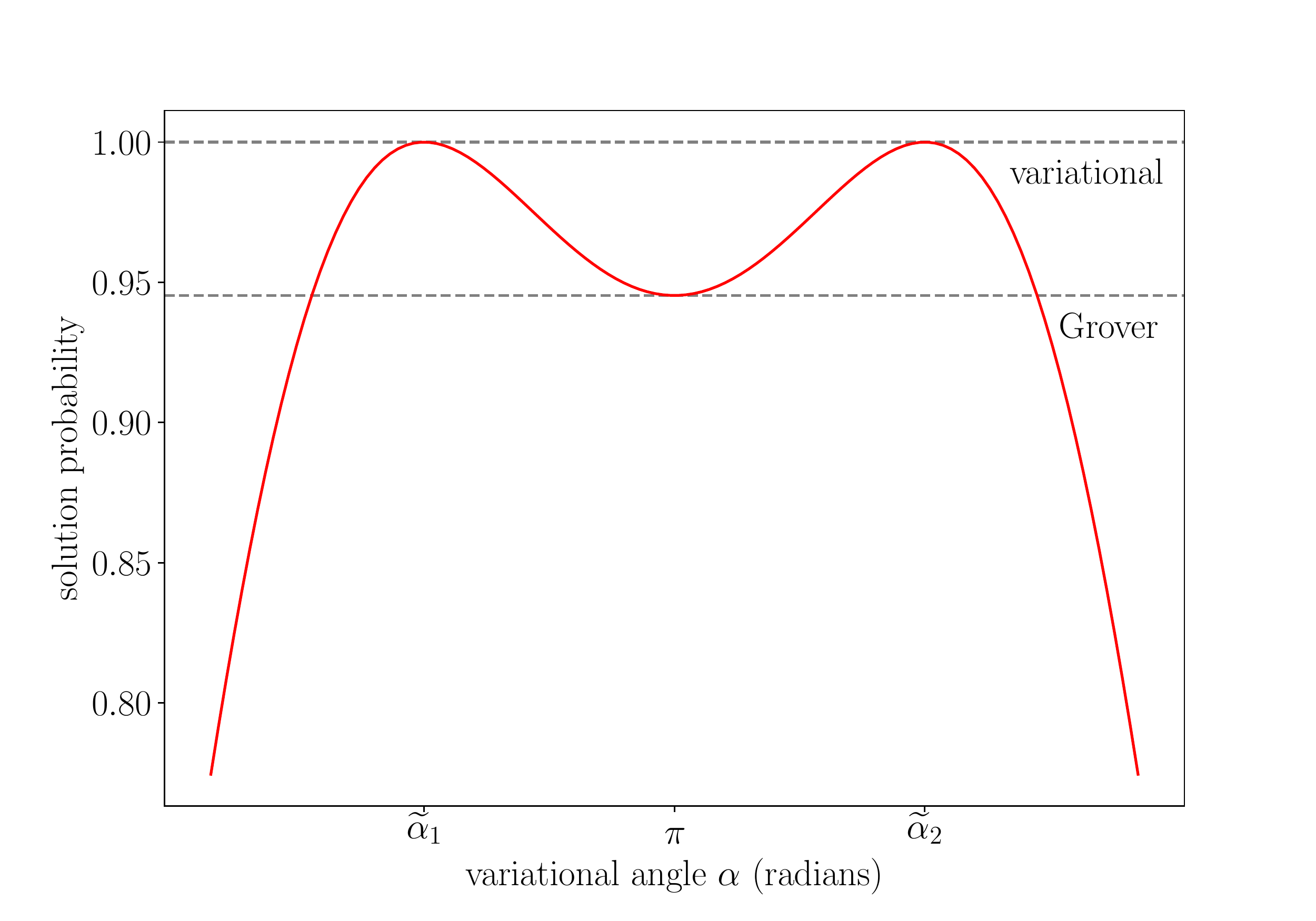}
        \caption{}
    \end{subfigure}
    \begin{subfigure}[t]{0.49\textwidth}              
        \includegraphics[width=\textwidth]{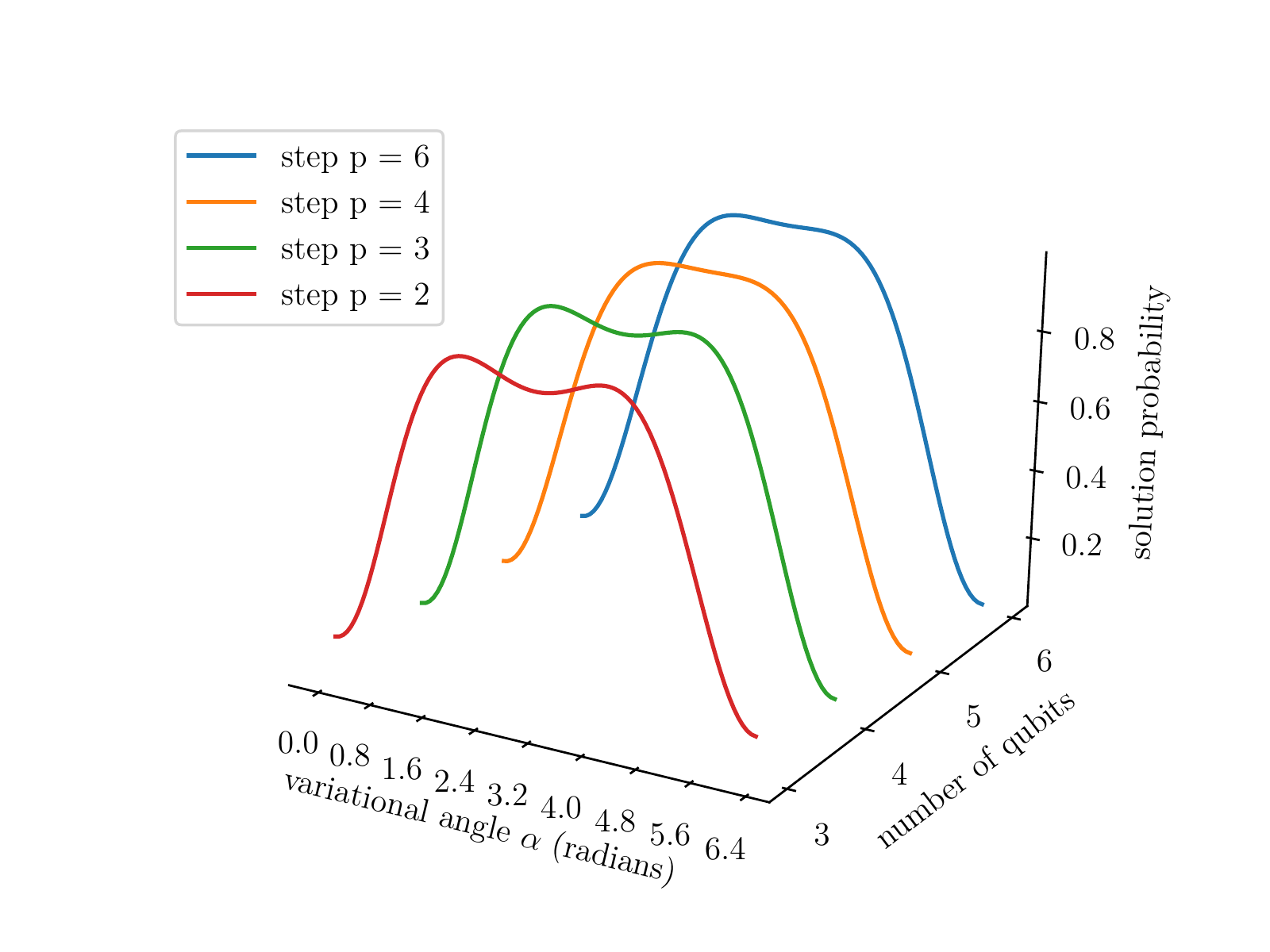}
        \caption{}
    \end{subfigure}
   
    \caption{(color online) (left) Grover's algorithm takes a saddle point between two hills.  Variational search recovers the hill peaks.  Note that the valley becomes increasingly less pronounced past four qubits, providing negligible range for improvement. (right) Probability as a function of the variational angle for the 3 qubit case. Grover's algorithm is recovered in the case $\alpha = \pi$, the variational algorithm obtains angles $\widetilde{\alpha}_{1} = 2.12^{rad}$  and $\widetilde{\alpha}_2 = 2\pi - \widetilde{\alpha}_{1}$  }
    \label{fig:alpha3d}
\end{figure}

\begin{figure}
    \centering
   
        \includegraphics[width=0.7\textwidth]{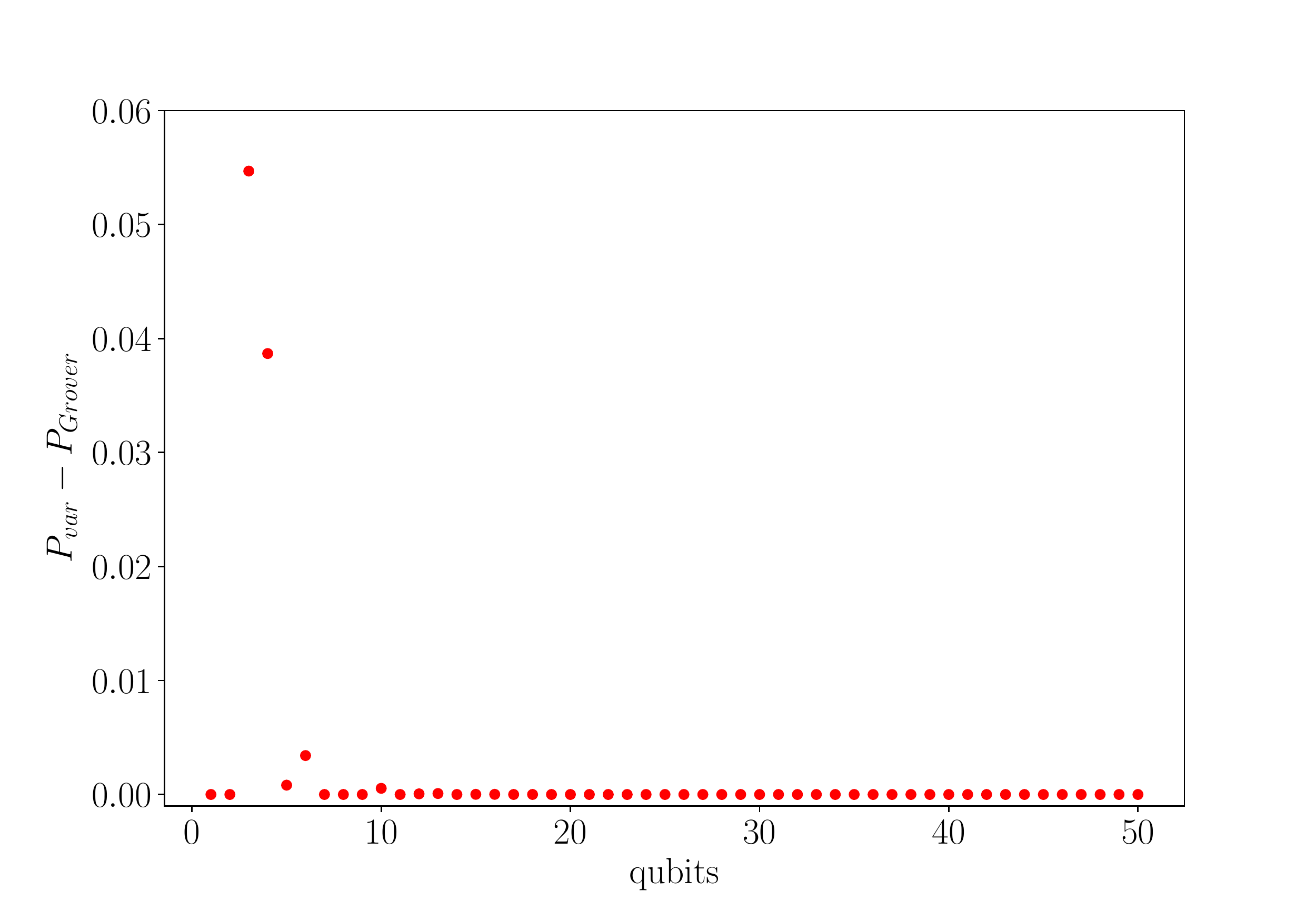}
        
    

    \caption{Difference between variational and Grover. As the number of qubits grows there are exponentially diminishing oscillations in this difference. Each probability is defined for the optimal step in Grover's algorithm for the number of qubits.}
    \label{fig:diff-prob}
\end{figure}

       

        

       

\begin{figure}[htb]

\begin{subfigure}[b]{0.5\textwidth}
\mbox{ 
\Qcircuit @C=2.5em @R=3em {
& \gate{X^{1-\omega_1}} & \ctrl{3} & \gate{X^{1-\omega_1}} & \qw \\
& \gate{X^{1-\omega_2}} & \control \qw & \gate{X^{1-\omega_2}} & \qw \\
& \vdots &  & \vdots &  \\
& \gate{X^{1-\omega_n}} & \gate{R_{\alpha}} $e^{i\alpha}$ \qw & \gate{X^{1-\omega_n}} & \qw
}
}
\caption{oracle circuit}
\label{fig:oracle}
\end{subfigure}
\begin{subfigure}[b]{0.49\textwidth}
\mbox{ 
\Qcircuit @C=2.5em @R=3em {
& \gate{H} &\gate{X} & \ctrl{3} & \gate{X} &  \gate{H} & \qw \\
& \gate{H} &\gate{X} & \ctrl{2} \qw & \gate{X} & \gate{H} & \qw \\
& \vdots & &  & \vdots &  \\
& \gate{H} &\gate{X} & \gate{R_{\beta}} $e^{i\alpha}$ \qw & \gate{X} & \gate{H} & \qw
}
}
\caption{diffusion circuit}
\label{fig:qaoa_circuit}
\end{subfigure}

\caption{Circuit realization of diffusion and oracle circuits. Oracle and diffusion operators can be rewritten via $n$-body control gates $\mathcal{V}(\alpha) = \bigotimes_{i=1}^{n}X_i^{1-\omega_i}\biggr(\eye_{2^n-1}\oplus e^{\imath\alpha}\biggl)\bigotimes_{i=1}^{n}X_i^{1-\omega_i}$ and
$\mathcal{K}(\beta) = H^{\otimes n}X^{\otimes n}\biggr(\eye_{2^n-1} \oplus e^{\imath\beta} \biggl)X^{\otimes n}H^{\otimes n}$ and therefore can be realized using $O(n^2)$ basic gates \cite{Barenco95},
 here operator $\eye_{2^n-1}$ is the $(2^n-1)\times(2^n-1)$ identity matrix. (See also the gate realizations in \cite{3qbit-experiment} which can be readily bootstrapped to realize (a) and (b) above). 
}
\label{fig:all_circuits}
\end{figure}
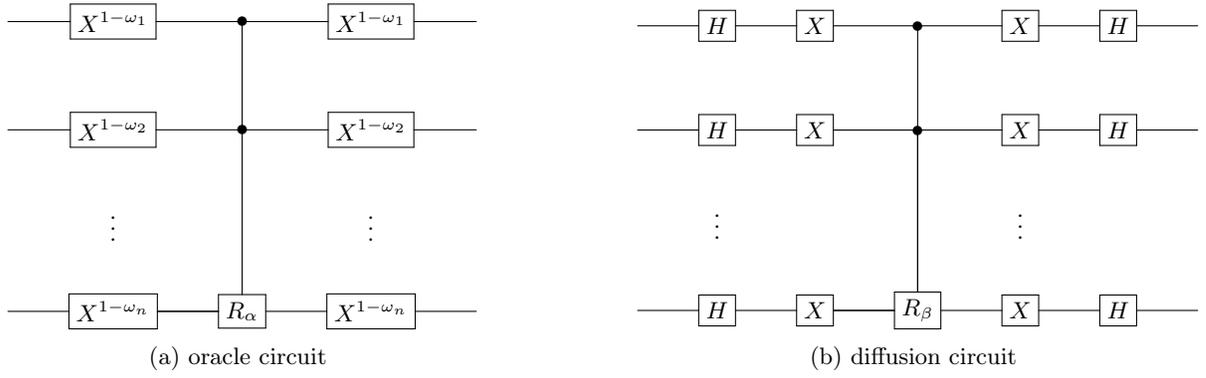




\end{document}